\newcommand{\EQ}{\begin{eqnarray}}
\newcommand{\EN}{\end{eqnarray}}
\newcommand{\EQQ}{\begin{eqnarray*}}
\newcommand{\ENN}{\end{eqnarray*}}
\newtheorem{thm}{\bf \em{Theorem}}
\newtheorem{lem}{\bf \em{Lemma}}
\newtheorem{rem}{\bf \em{Remark}}
\newtheorem{defi}{Definition}
\newtheorem{prob}{\bf \em{Problem}}
\begin{document}
\title{Adaptive Consensus and Parameter Estimation of Multi-Agent Systems with
An Uncertain Leader}
\author{Shimin~Wang and Xiangyu Meng \thanks{%
Shimin~Wang is with the Department of Electrical and Computer Engineering
11-203, Donadeo Innovation Centre for Engineering University of Alberta 9211
- 116 Street NW, Edmonton, Alberta, Canada. E-mail: shimin1@ualberta.ca.}
\thanks{%
Xiangyu Meng is with the Division of Electrical and Computer Engineering,
Louisiana State University, Baton Rouge, LA 70803, USA. E-mail:
xmeng5@lsu.edu. } }
\maketitle

\begin{abstract}
In this note, the problem of simultaneous leader-following consensus and parameter estimation is
studied for a class of multi-agent systems subject to an uncertain
leader system. The leader system is described by a sum of
sinusoids with unknown amplitudes, frequencies and phases. A
distributed adaptive observer is established for each agent to
estimate the unknown frequencies of the leader.
It is shown that if the signal of the leader is sufficiently
rich, the estimation errors of the unknown frequencies converge
to zero asymptotically for all the agents. Based on the designed distributed adaptive observer, a distributed adaptive control law is
synthesized for each agent to solve the leader-following
consensus problem.
\end{abstract}

\begin{IEEEkeywords}Consensus, adaptive observer, multi-agent systems, distributed control, parameter estimation.\end{IEEEkeywords}

\section{Introduction}

\IEEEPARstart{M}{ulti-agent} coordination has attracted considerable attention, for example, see \cite{jadbabaie2003coordination,
olfati2007consensus,liu2016finite,cao2010decentralized,cao2011formation,wangsong2017,feng2019finite,lin2018distributed}, where a significant amount of research has been directed towards the leader-following consensus problem \cite{cai2016leaderEL,xiao2017adaptive,chen2019feedforward}. Knowing the parameters of the
leader plays a key role in the design of distributed control laws for solving the
leader-following consensus problem of multi-agent systems. The leader-following consensus problem and the leader-following flocking problem for a leader with known parameters have been solved by \cite{su2011cooperative,cai2016leaderEL} and \cite{yu2010distributed}, respectively, where
the system matrix of the
leader is used to design a distributed control law.

Without the knowledge of the leader's parameters, adaptive observers were utilized in \cite{modares2016optimal,wu2017adaptive,xiao2017adaptive} to estimate the states of the leader. Since only a subset of the followers is able to access the signal of the leader, all followers take advantage of the connectivity of the topological graph to share information so as to achieve leader-following consensus. The aforementioned references \cite{modares2016optimal, wu2017adaptive, xiao2017adaptive} designed adaptive observers by sharing the estimated states of the leader. But the states of the leader may be inaccessible. This constraint has been relaxed in \cite{lu2019leader} by using the output of the leader and all followers exchange their estimated output of the leader. We also notice that another type of observers based on sliding mode have been proposed, for example, see \cite{cao2010decentralized, zhao2015distributed}, where bounded velocity and bounded acceleration are assumed in \cite{cao2010decentralized} and \cite{zhao2015distributed}, respectively. While the sliding mode estimator methods are able to show finite-time estimation, they suffer from chattering \cite{chung2009cooperative}, which remains as a mathematical challenge. Even though the leader-following consensus problem with an uncertain leader has been solved by showing that the outputs of all followers converge to the leader's signal, the convergence of the estimation errors of leader's unknown parameters is not analyzed in the above references.

The objective and primary contribution of this note is to propose a formal framework for simultaneous leader-following consensus and parameter estimation of an uncertain leader in multi-agent networks. This is, however, a challenging task. Motivated by traditional adaptive observers \cite{rimon1992new, narendra1974stableI, narendra1974stableII, ioannou1996robust, adetola2014adaptive, hsuliu1999, jiang2019parameter,li2019kernel,carnevale2013semi,zhu2015parameter} for single agent systems, we make an attempt by extending them to multi-agent systems. Standard approaches for single agent systems do not apply straightforwardly to the multi-agent setting due to the network constraint that the leader is indistinguishable from followers.  Analysis of the parameter estimation in the leader-following consensus problem with an uncertain leader is relatively less well understood, for example, see \cite{wanghuang2018}. We distinguish our work with \cite{wanghuang2018} by designing adaptive observes with output information. In this note, the leader is described by the sum of sinusoids with unknown amplitudes, frequencies and phases. The signal of the leader can be regarded as the output of a linear system with unknown parameters. Assuming the upper bound of these frequencies are known, we can design an output based distributed adaptive observer for each agent to estimate the virtual states of the leader. Furthermore, if the signal of the leader is sufficiently rich, then the estimated parameters asymptotically converge to the actual parameters of the leader. Based on the estimated virtual states of the leader, a local controller is designed for each agent to track the signal of the leader.

In summary, this note makes several contributions toward leader-following consensus and parameter estimation of multi-agent systems with an uncertain leader. First, distributed adaptive observers are designed to estimate both the virtual states and unknown parameters of the leader without knowing the amplitudes and derivative of the leader's signal. In this sense, the sliding mode based approaches are inapplicable. Second, a sufficient condition is identified to guarantee that the parameter estimation errors converge to zero asymptotically. Such analysis is missing in most existing references. Third, agents only need to share the estimated output information, which includes the distributed full state adaptive observers as a special case.  Last, the proposed framework can be easily extended to other cooperative problems, such as cooperative output regulation of linear multi-agent systems, leader-following consensus of multiple Euler-Lagrange systems and attitude synchronization of multiple rigid body systems.

The rest of this note is organized as follows. In Section \ref{section1},
we formulate the problem of simultaneous leader-following consensus and parameter estimation (SLFCAPE). Section \ref%
{section2} is devoted to the design of distributed adaptive observers and distributed observer based controllers. In Section \ref{section4}, detailed analyses are given to show the effectiveness of the proposed control strategy for solving the SLFCAPE problem with an uncertain leader system. A simulation example is given in Section \ref%
{section5}. Finally, we conclude this note in Section~\ref{section6}.

\section{Problem Formulation}

\label{section1}

\subsection{Multi-agent Network}
\label{section1a}

\label{ss1} As in \cite{cai2016leaderEL}, the multi-agent system is composed
of a leader and $N$ followers. The network topology of the multi-agent
system is described by a graph $\bar{\mathcal{G}}=\left(\bar{\mathcal{V}},%
\bar{\mathcal{E}}\right)$ with $\bar{\mathcal{V}}=\{1,\dots,N,N+1\}$ and $\bar{%
\mathcal{E}}\subseteq\left[\bar{\mathcal{V}}\right]^2$, which is the
2-element subsets of $\bar{\mathcal{V}}$. Here node $N+1$ is associated with
the leader and node $i$ is associated with follower $i$ for $i=1,\dots,N$.
For $i=1,\dots,N,N+1$, $j=1,\dots,N$, $(i,j) \in \bar{\mathcal{E}} $ if and
only if agent $j$ can receive information from agent $i$. Let $\bar{\mathcal{%
N}}_i=\{j|(j,i)\in \bar{\mathcal{E}}\}$ denote the neighborhood set of agent
$i$. Let $\mathcal{G}=(\mathcal{V},\mathcal{E})$ denote the induced subgraph
of $\bar{\mathcal{G}}$ with $\mathcal{V}=\{1,\dots,N\}$. Assume that $%
\mathcal{\bar{G}}$ contains a spanning arborescence with node $N+1$ as the root
and $\mathcal{G}$ is an undirected graph. Let $\mathcal{\bar{L}}$ be the
Laplacian matrix of the graph $\mathcal{\bar{G}}$, and $H$ is obtained by
deleting the last row and column of $\mathcal{\bar{L}}$. Then, $H$ is a
positive definite symmetric matrix with $\lambda_1 >0$ being its smallest eigenvalue \cite{su2011cooperative}. More details of the graph theory can be found in \cite%
{godsil2013algebraic}.

\subsection{Leader Dynamics}
The signal of the leader system is described by:
\begin{align}  \label{leaderall}
y_{N+1}(t)=\sum_{k=1}^l\varphi_k\sin\left(\omega_k t +\psi_k\right),
\end{align}
where $\varphi_k$, $\psi_k$ and $\omega_k>0$, for $k=1,\ldots,l$ are
unknown amplitudes, phases and frequencies, respectively. Assume that $%
\omega_k<\bar{\omega}$ for $k=1,\ldots,l$, where $\bar{\omega}$ is a known
upper bound.

Also assume that the signal $y_{N+1}$ is sufficiently rich of order
$2l$, that is, it consists of at least $l$ distinct frequencies \cite{ioannou1996robust}.

\begin{rem}The leader-following problem for an unknown leader with bounded velocity and bounded acceleration are discussed in \cite{cao2010decentralized} and \cite{zhao2015distributed}, respectively, where the sliding mode estimators are used to handle the uncertainty of the leader. However, such approaches are inapplicable here since $\varphi_k$ is unknown for each follower, for $k=1,\ldots,l$.
\end{rem}
\subsection{Follower Dynamics}
The dynamics of follower $i$ are described by the following single-input and
single-output system:
\begin{subequations}\label{MARINEVESSEL1}
\begin{align}
\dot{x}_{i,s}&=x_{i,s+1},\quad s=1,\ldots,r-1, \nonumber\\
\dot{x}_{i,r}&=u_i,\\
y_i&=x_{i,1},
\end{align}
\end{subequations}
 where $x_i=\mathrm{col}\left(x_{i,1},\ldots,x_{i,r}%
\right)\in \mathds{R}^{r}$, $u_i\in \mathds{R}$, and $y_i\in \mathds{R}$ are the state vector, control input and the output of follower~$i$, respectively, for $i=1,\ldots,N$.

\subsection{Objective}
The SLFCAPE problem considered in this paper is formulated as
follows.
\begin{prob}[SLFCAPE Problem]
\label{ldlesp} Given a multi-agent network $\mathcal{\bar{G}}$ with the
leader dynamics (\ref{leaderall}) and the follower dynamics (\ref%
{MARINEVESSEL1}), find a distributed parameter estimator and a distributed controller such that $%
x_i(t)$ is bounded for all $t\geq0$ and
\begin{equation*}
\lim\limits_{t\rightarrow\infty}\left(y_i\left(t\right)-y_{N+1}\left(t\right)%
\right)=0~~\text{and}~~\lim\limits_{t\rightarrow\infty}\left(\hat{\omega}_{i,k}\left(t\right)-\omega_k%
\right)=0,
\end{equation*}
where $\hat{\omega}_{i,k}(t)$ is an estimate of $\omega_k$ for $k=1,\ldots,l$, for any initial conditions $x_i(0)$, $i=1,\ldots,N$.
\end{prob}

\section{Distributed Adaptive Control Design}\label{section2}
The signal $y_{N+1}\in \mathds{R}$ can be regarded as the output of the following virtual linear system,%
\begin{subequations}\label{leader2}
\begin{align}
\dot{v}&=g(\theta)v=M v-\sum\limits_{k=1}^{l}\theta_k E_{2k}y_{N+1} \\
y_{N+1}&=C v
\end{align}
\end{subequations}
where $E_{2k}=\mathrm{col}\left(0_{1 \times (2k-1)},1, 0_{1 \times
(2l-2k)}\right)$, $v\in \mathds{R}^{2l}$,
\begin{align}  \label{MC}
C^T=\mathrm{col}\left(1,0_{(2l-1)\times 1}\right)\in \mathds{R}^{2l},~~
M=\left[
\begin{array}{cc}
0 & I_{2l-1} \\
0 & 0
\end{array}
\right],\end{align}
 and the matrix function $%
g(\cdot):\mathds{R}^l\mapsto \mathds{R}^{ 2l \times 2l}$ defined as
\begin{align}  \label{skewopen}
g(\theta)=\left[
            \begin{array}{ccccccc}
              0 & 1 & 0 & \cdots & 0 & 0 & 0 \\
              -\theta_1 & 0 & 1 &\cdots & 0 & 0 & 0 \\
              \vdots & \vdots & \ddots & \ddots & \vdots & \vdots & \vdots \\
              0 & 0 & 0 & \ddots & 1 & 0 & 0 \\
              -\theta_{l-1} & 0 & 0 & \cdots & 0 & 1 & 0 \\
              0& 0 & 0& \cdots& 0 & 0 &1 \\
              -\theta_{l}  & 0 & 0 & \cdots & 0& 0 &0 \\
            \end{array}
          \right].
\end{align}
Here $\theta=\mathrm{col}\left(\theta_1,\ldots,\theta_l\right)$ is an
invertible reparameterization of the $l$ unknown parameters $\omega_1^2,\ldots,\omega_l^2$ through the following characteristic polynomial of system \eqref{leader2},%
\begin{align}  \label{char1}
\prod_{k=1}^{l}\left(s^2+\right.\omega_k^2&\left.\right)=s^{2l}+\sum_{k=1}^{l}%
\omega_k^2s^{2(l-1)}+\cdots+\prod_{k=1}^{l}\omega_k^2  \notag \\
=&s^{2l}+\theta_1s^{2(l-1)}+\theta_2s^{2(l-3)}+\cdots+\theta_l.
\end{align}
It is easy to see that $\|\theta\|^2\leq \pi$, where%
\begin{equation}\label{pi1}
\pi=\sum\nolimits_{k=1}^{l}{\binom{l}{k}}^2\bar{\omega}^{4k}.
\end{equation}

Without proceeding further, we review the results of adaptive observers for a single agent system in \cite{marino2002global} in order to provide a better understanding of the proposed distributed adaptive observer. The result in \cite{marino2002global} also shows how to estimate
the unknown frequencies $\omega_k$ for $k=1,\ldots, l$.
\subsection{Centralized Adaptive Observer}
Choose an arbitrary vector $a=\mathrm{col}\left(a_{1},a_{2},\ldots , a_{2l-1}\right)
$ such that all the roots of the polynomial equation
$$\lambda^{2l-1}+a_{1}\lambda^{2l-2}+\cdots+a_{2l-2}\lambda+a_{2l-1}=0$$
have negative-real parts.
Following the filtered transformation steps as in~\cite{marino2002global}, system \eqref{leader2} can
be transformed into the following adaptive observer form
\begin{subequations}
\label{neq4}
\begin{align}
\dot{\eta}&= A\eta+B y_{N+1},  \label{neq4b} \\
\dot{\chi}& = A^T\chi+E y_{N+1},  \label{neq4i} \\
\dot{y}_{N+1}&= E^T \eta+a_{1} y_{N+1}+\chi^T F^{T}\theta,
\end{align}
\end{subequations}
where $\chi\in %
\mathds{R}^{2l-1}$ and $\eta\in %
\mathds{R}^{2l-1}$ are filtered transformation vectors,%
\begin{align}\label{EFmatrix} E&=\mathrm{col}\left(1,0_{(2l-2)\times 1}\right)\in \mathds{R}^{2l-1},\notag\\
F&=\mathrm{blkdiag}\left(\left(I_{l-1}\otimes [-1,0]\right),-1\right),
\end{align}
with $\otimes$ denoting the Kronecker product.  and the matrices $A\in \mathds{R}^{(2l-1)\times (2l-1)}
$ and $B\in \mathds{R}^{2l-1}$ are defined as%
\begin{equation}
A=\left[-a\left.
\begin{tabular}{|l}
$I_{2l-2}$ \\
$0_{1\times \left( 2l-2\right) }$
\end{tabular}%
\right. \right] ,\text{ }B=\left[
\begin{array}{c}
a_{2}-a_{1}a_{1} \\
a_{3}-a_{2}a_{1} \\
\vdots  \\
a_{(2l-1)}-a_{2l-2}a_{1} \\
-a_{2l-1}a_{1}
\end{array}
\right] .  \notag
\end{equation}
The adaptive observer for the system \eqref{neq4} proposed in \cite{marino2002global} is given as follows:
\begin{subequations}
\label{adobser}
\begin{align}
\dot{\hat{\theta}}&=\kappa F \chi(y_{N+1}-\hat{y}), \\
\dot{\hat{y}}&= E^T \eta+a_{1} y_{N+1}+\chi^T F^T\hat{\theta}+\mu(y_{N+1}-\hat{y}),
\end{align}
\end{subequations}
where $\hat{\theta}\in \mathds{R}^{l}$ is the estimation of $\theta$ which contains the unknown frequencies of the leader; $\hat{y}\in %
\mathds{R}$ is the estimation of the leader's output $y_{N+1}$, $\mu >\frac{1}{4}\|A\|$ and $\kappa$ is a positive
constant relating to the adaptation gain.

We now give the so-called persistently exciting property of a signal.
\begin{defi}
\cite{sastry2011adaptive} A bounded piecewise continuous function $%
f:[0,+\infty)\mapsto \mathds{R}^n$ is said to be persistently exciting
if there exist positive constants $\epsilon$, $t_0$, $T_0$ such that,
\begin{equation*}
\frac{1}{T_0}\int^{t+T_0}_{t} f(\tau) f^T (\tau) d\tau\geq\epsilon I_n,~~~~\forall
t\geq t_0
\end{equation*}
\end{defi}
According to \cite{marino2002global}, if $F\chi(t)$ satisfies the condition of persistent excitation with $\chi(t)$ being generated by \eqref{neq4i}, then the states $\hat{\theta}(t)-\theta$ and $\hat{y}(t)-y_{N+1}(t)$ are bounded and tend to zero as $t$
goes to infinity for any initial condition.
\begin{rem}
The adaptive observer in \eqref{adobser} relies on not only $y_{N+1}$ but also $\eta$ and $\chi$.
The signals $\eta$ and $\chi$ in~\eqref{adobser} are generated by (\ref{neq4b}) and (\ref{neq4i}) with $y_{N+1}$ as the input,
respectively.
In the multi-agent setting, some followers are not directly connected to the leader, and the followers that are directly connected to the leader do not know which of its neighbors is the leader.
Thus, the adaptive observer \eqref{adobser} can not be extended to the multi-agent case directly. Such an extension, however, is challenging.
\end{rem}

The design schematic is shown in Fig.~\ref{figschematic}, where the distributed adaptive observer and the observer based controller will be introduced below.
\begin{figure}
\centering
\includegraphics[trim=180 530 -80 115,clip,width=\textwidth]{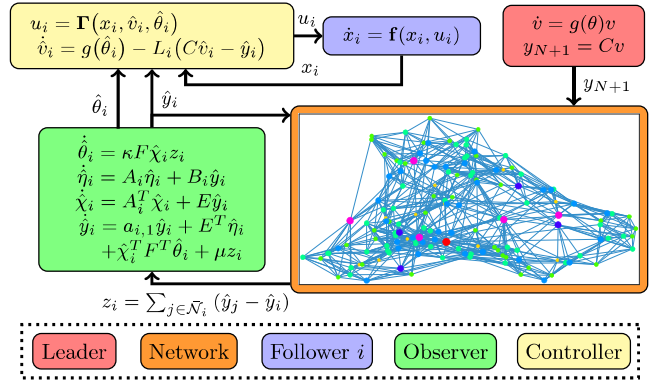}
\caption{Schematic of distributed adaptive observer based controller}\label{figschematic}
\end{figure}
\subsection{Distributed Adaptive Observer}
We are ready to introduce the distributed adaptive
observer for follower $i$:
\begin{subequations}\label{compensator2}
\begin{align}
\dot{\hat{\eta}}_i&= A_i \hat{\eta}_i+B_i \hat{y}_i  \label{compensator2b} \\
\dot{\hat{\chi}}_i&=A_i^T\hat{\chi}_i+E \hat{y}_i  \label{compensator2d} \\
\dot{\hat{\theta}}_i&=\kappa F \hat{\chi}_{i}\sum_{j \in \mathcal{\bar{N}}%
_i}(\hat{y}_j-\hat{y}_i)  \label{compensator2c} \\
\dot{\hat{y}}_i&=E^{T}\hat{\eta}_{i}+a_{i,1}\hat{y}_i+\hat{\chi}_{i}^T F^{T}\hat{\theta}_i+\mu\sum_{j \in \mathcal{\bar{N}}_i}(\hat{y}_j-\hat{y}_i)
\label{compensator2a}
\end{align}
\end{subequations}
where
$a_i=\mathrm{col}\left(a_{i,1},a_{i,2},\ldots , a_{i,(2l-1)}\right)$ is chosen such that all the roots of the polynomial equation
\begin{align}  \label{aplychar1}
\lambda^{2l-1}+a_{i,1}\lambda^{2l-2}+\cdots+a_{i,(2l-2)}\lambda+a_{i,(2l-1)}=0,
\end{align}
have negative-real parts, $E$ and $F$ are defined in \eqref{EFmatrix},
\begin{align}
A_i=\left[-a_i\left.
\begin{tabular}{|l}
$I_{2l-2}$ \\
$0_{1\times \left( 2l-2\right) }$
\end{tabular}\right. \right],
 B_i=\left[
\begin{array}{c}
a_{i,2}-a_{i,1}a_{i,1} \\
a_{i,3}-a_{i,2} a_{i,1} \\
\vdots \\
a_{i,(2l-1)}-a_{i,(2l-2)}a_{i,1} \\
-a_{i,(2l-1)}a_{i,1} \\
\end{array}
\right],\notag
\end{align}
$\hat{\theta}_i\in \mathds{R}^{l}$ is the estimation of $\theta$ which contains the unknown frequencies of the leader; $\hat{y}_i\in %
\mathds{R}$ is the estimation of the leader's output $y_{N+1}$ with $\hat{y}_{N+1}=y_{N+1}$; $\hat{\eta}_i\in \mathds{R}^{2l-1}$ and $\hat{\chi}%
_i\in \mathds{R}^{2l-1}$, for $i=1,\dots,N$. The design parameter $\mu$ is chosen such that 
\begin{align}  \label{muiequa}
\mu > \frac{2\bar{a}\lambda_{1}+(1+\pi)\lambda _{1}^{2}+\bar{\gamma}_1^2+\bar{\gamma}_2^2}{%
2\lambda_{1}^2},\end{align}
where $\lambda_1$ is the smallest eigenvalue of $H$ defined in Section \ref{section1a}, 
 $\bar{a}=\max\{a_{1,1},\ldots,a_{N,1}\}$, $\pi$ is defined in (\ref{pi1}), $\bar{\gamma}_1=\max \{\gamma_{1,1},\ldots,\gamma_{N,1}\}$, and $\bar{\gamma}_2=\max \{\gamma_{1,2},\ldots,\gamma_{N,2}\}$ with
\begin{align}\label{gamma12}
\gamma_{i,1}=&\left\|E^{T}(sI-A_i)^{-1}B_i\right\|_\infty\nonumber\\ \gamma_{i,2}=&\left\|F (sI-A_i^T)^{-1}E\right\|_\infty,~~~~i=1,\dots,N.
\end{align}%
\begin{rem}The distributed adaptive observer \eqref{compensator2} is designed based on the adaptive observer form of (\ref{leader2}) inspired by \cite{marino2002global}, where the variables $\hat{\eta}_i$, $\hat{\chi}_i$, $\hat{\theta}_i$ are introduced via the filtered transformation. The adaptive property of the distributed observer refers to the method used by \eqref{compensator2} which adapts to the leader system whose parameters are initially unknown. The observer \eqref{compensator2} relies on only the sum $\sum_{j \in \mathcal{\bar{N}}_i}\hat{y}_j$ which contains all outputs of its neighbours to provide estimations of both the unknown parameter $\theta$ and the signal of the leader $y_{N+1}$.
\end{rem}

\subsection{Distributed Observer Based Controller}
 Define functions $f_{p}(\cdot,\cdot): \mathds{R}^{2l}\times \mathds{R}^{l} \mapsto %
\mathds{R}$ for $p=1,\dots,r+1$ as:
\begin{subequations}
\label{regulatorequa}
\begin{align}
f_1(v,\theta)=&C v \\
f_{s+1}(v,\theta)=&\frac{\partial{f}_s(v,\theta)}{\partial v}g(\theta)v,\quad s=1,\ldots,r,
\end{align}
\end{subequations}
where the matrix function $g(\cdot)$ is defined in \eqref{skewopen}.
Choose $\alpha_1,\dots,\alpha_r$ such that the roots of the
polynomial equation
\begin{align}\label{alphacontrol}\alpha_{r+1}\lambda^{r}+\alpha_r\lambda^{r-1}+\cdots+\alpha_2\lambda+\alpha_1=0\end{align}
have negative-real parts with $\alpha_{r+1}=1$.

The distributed observer based controller for follower $i$ is given as follows:
\begin{subequations}
\label{distrista}
\begin{align}
\dot{\hat{v}}_i& = M \hat{v}_i-L_i(C\hat{v}_i-\hat{y}_i)-\sum_{k=1}^{l}\hat{%
\theta}_{i,k}E_{2k}\hat{y}_i,  \label{distristab} \\
u_i&=\sum_{s=1}^{r+1}\alpha_s f_{s}(\hat{v}_i,\hat{\theta}%
_i)-\sum_{s=1}^{r}\alpha_s x_{i,s},  \label{distristaa}
\end{align}
\end{subequations}
where $M$, $C$ and $E_{2k}$ can be found in \eqref{leader2}, $L_i$ is chosen such that $M-L_iC$ is a Hurwitz matrix, $\hat{v}_i\in \mathds{R}^{2l}$ is the estimation of $v$, $u_i$ is the control input for follower $i$, $\hat{\theta}_{i,k}$ is the $k$th entry of $\hat{\theta}_i$, $\hat{y}_i$ and $\hat{\theta}_i$ are generated by \eqref{compensator2}, for $i=1,\cdots,N$.
\section{Solvability Analysis}\label{section4}
\subsection{Output Estimation Error Analysis}
To analyze the output estimation error of follower $i$, let us transform the dynamics of the leader (\ref{leader2}) into the following adaptive observer form via the filtered transformation proposed in~\cite{marino2002global}:%
\begin{subequations}
\label{neq4a}
\begin{align}
\dot{\eta}_i&= A_i \eta_i+B_i y_{N+1},  \label{neq4ba} \\
\dot{\chi}_i& = A^T_i\chi_i+Ey_{N+1},  \label{neq4ia} \\
\dot{y}_{N+1}&= E^T \eta_i+a_{i,1} y_{N+1}+\chi^T_i F^{T}\theta,
\end{align}
\end{subequations}
where $\chi_i\in \mathds{R}^{2l-1}$ and $\eta_i\in \mathds{R}^{2l-1}$ are filtered transformation vectors, $A_i$, $B_i$, $E$ and $F$ are defined in \eqref{compensator2}. 

Let $\tilde{y}_i=\hat{y}_i-y_{N+1}$, $\tilde{\eta}_i=\hat{\eta}_i-\eta_i$, $\tilde{\chi}_i=\hat{\chi}_i-\chi_i$, $z_{i}=\sum_{j \in \mathcal{\bar{N}}_i}(\hat{y}_j-\hat{y}_i)$ and $\tilde{\theta}_{i}=\hat{\theta}_{i}-\theta$, for $i=1,\cdots,N$.
Then, based on \eqref{compensator2} and \eqref{neq4a}, we
can obtain:
\begin{subequations}
\label{reeq12}
\begin{align}
\dot{\tilde{\eta}}_i&= A_i \tilde{\eta}_i+B_i \tilde{y}_i,\label{reeq12a} \\
\dot{\tilde{\chi}}_i&= A^T_i\tilde{\chi}_i+E\tilde{y}_i, \label{reeq12b}\\
\dot{\tilde{\theta}}_i&=\kappa F \hat{\chi}_{i}z_{i}, \\
\dot{\tilde{y}}_i&=E^T \tilde{\eta}_{i}+a_{i,1}\tilde{y}_i+\tilde{\chi}_{i}^T
F^T \theta+\hat{\chi}_{i}^T F^T\tilde{\theta}_i+\mu z_{i}.
\end{align}\end{subequations}
Define  $\tilde{y}=\mathrm{col}(\tilde{y}_1,\ldots,\tilde{y}_N)$ and $z=%
\mathrm{col }(z_{1},\ldots,z_{N})$. Then, we have the following relation:
\begin{align}  \label{eqev}
z =-H\tilde{y}.
\end{align}
It can be shown that (\ref{reeq12}) can be written in the following compact
form:
\begin{subequations}
\label{neq5}
\begin{align}
\dot{\tilde{\eta}}=&A_d\tilde{\eta}+B_d\tilde{y} \\
\dot{\tilde{\chi}}= &A_d^T\tilde{\chi}%
+\left(I_N\otimes E\right)\tilde{y}  \label{reeq2abbbc} \\
\dot{\tilde{\theta}}=&-\kappa\left(I_N\otimes F\right)\hat{\chi}_d H\tilde{y}
\label{reeq2abbbb} \\
\dot{\tilde{y}}=&\left(a_d-\mu H\right)\tilde{y}+\left(I_N\otimes
E^T\right)\tilde{\eta}  \notag \\
&+\tilde{\chi}_{d}^T\left[\mathds{1}_N\otimes \left(F^T\theta\right)\right]+\hat{\chi}%
_{d}^T\left(I_N\otimes F^T \right)\tilde{\theta}
\end{align}\end{subequations}
where $\mathds{1}_N\in \mathds{R}^{N}$ is the vector of all ones, and%
\begin{align}
A_d&=\mathrm{blkdiag}(A_1,\ldots,A_N), B_d=\mathrm{blkdiag}(B_1,\ldots,B_N), \notag \\
\tilde{\eta}&=\mathrm{col}(\tilde{\eta}_1,\ldots,\tilde{\eta}_N),~~\tilde{%
\chi}_d=\mathrm{blkdiag} (\tilde{\chi}_1,\ldots,\tilde{\chi}_N),  \notag
\\
\tilde{\theta}&=\mathrm{col}(\tilde{\theta}_1,\ldots,\tilde{\theta}_N),~~%
\hat{\chi}_d=\mathrm{blkdiag}\left(\hat{\chi}_1,\ldots,\hat{\chi}%
_N\right),  \notag \\
\tilde{\chi}&=\mathrm{col}(\tilde{\chi}_1,\ldots,\tilde{\chi}_N),~~a_{d}=\mathrm{blkdiag}(a_{1,1},\ldots,a_{N,1}).  \notag
\end{align}
We now ready to establish our main technical lemmas.
\begin{lem}
\label{lemma4}Consider systems \eqref{leaderall}, \eqref{compensator2}, %
\eqref{neq4a} and \eqref{neq5}. For any $\hat{\eta}_i(0)$, $\hat{\chi}_i(0)$, $%
\hat{\theta}_i(0)$ and $\hat{y}(0)$, choose any $\kappa>0$ and a constant
 $\mu$ satisfying \eqref{muiequa}. Then, for $i=1,\cdots,N$, $\hat{y}_i(t)$, $\tilde{\eta}_i(t)$, $\hat{\chi}_i(t)$ and $\hat{\theta}_i(t)$ are
bounded for all $t\geq 0$ and satisfy
\begin{align}
\lim\limits_{t\rightarrow\infty}\tilde{\eta}_i(t)&=0  \label{eq101b} \\
\lim\limits_{t\rightarrow\infty}\tilde{\chi}_i(t)&=0  \label{eq101c} \\
\lim\limits_{t\rightarrow\infty}\tilde{y}_i (t)& =0  \label{eq101a} \\
\lim\limits_{t\rightarrow\infty}\hat{\chi}_{i}^T(t)F^T\tilde{\theta}_i(t)& =0.  \label{eq101d}
\end{align}
\end{lem}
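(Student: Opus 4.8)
The plan is to analyze the error system \eqref{neq5} with a single Lyapunov function, use the gain condition \eqref{muiequa} to render its derivative negative semidefinite, and then invoke Barbalat's lemma for the four limits.

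First I would take
\begin{align*}
V=\tilde{\eta}^{T}P_{1}\tilde{\eta}+\tilde{\chi}^{T}P_{2}\tilde{\chi}+\tfrac{1}{2}\tilde{y}^{T}H\tilde{y}+\tfrac{1}{2\kappa}\tilde{\theta}^{T}\tilde{\theta},
\end{align*}
with $P_{1},P_{2}\succ 0$ obtained from the bounded-real lemma applied to the Hurwitz block-diagonal matrices $A_{d}$, $A_{d}^{T}$ together with the output channels $I_{N}\otimes E^{T}$ and $I_{N}\otimes F$, so that the gains coupling $\tilde{\eta}$ and $\tilde{\chi}$ back into $\tilde{y}$ are controlled by $\bar{\gamma}_{1},\bar{\gamma}_{2}$ of \eqref{gamma12} (these are the $H_{\infty}$ norms of $E^{T}(sI-A_{i})^{-1}B_{i}$ and $F(sI-A_{i}^{T})^{-1}E$, the transfer matrices from $\tilde{y}_{i}$ to $E^{T}\tilde{\eta}_{i}$ and to $F\tilde{\chi}_{i}$). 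Differentiating $V$ along \eqref{neq5}, the crucial point is that the term $\tilde{y}^{T}H\hat{\chi}_{d}^{T}(I_{N}\otimes F^{T})\tilde{\theta}$ produced by $\dot{\tilde{y}}$ is exactly the negative of $\tfrac{1}{\kappa}\tilde{\theta}^{T}\dot{\tilde{\theta}}=-\tilde{\theta}^{T}(I_{N}\otimes F)\hat{\chi}_{d}H\tilde{y}$, since both are scalars and $H=H^{T}$; hence the a priori unbounded factor $\hat{\chi}_{d}$ cancels out of $\dot V$.

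What remains is $\tilde{y}^{T}H(a_{d}-\mu H)\tilde{y}$, the dissipation $-\tilde{\eta}^{T}Q_{1}\tilde{\eta}-\tilde{\chi}^{T}Q_{2}\tilde{\chi}$, and the cross terms coupling $\tilde{y}$ to $\tilde{\eta}$ (through $I_{N}\otimes E^{T}$) and to $\tilde{\chi}$ (through $\mathds{1}_{N}\otimes(F^{T}\theta)$, bounded using $\|\theta\|^{2}\le\pi$ from \eqref{pi1}). I would dominate each cross term by Young's inequality, absorbing one part into the dissipation and the $\|\tilde{y}\|^{2}$ term, then use $\lambda_{1}I\preceq H$, $a_{d}\preceq\bar{a}I$ and $-\mu\tilde{y}^{T}H^{2}\tilde{y}\le-\mu\lambda_{1}^{2}\|\tilde{y}\|^{2}$. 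Tracking the contributions of $\bar{a}$, $\pi$, $\bar{\gamma}_{1}^{2}$, $\bar{\gamma}_{2}^{2}$ produced this way should reproduce precisely the right-hand side of \eqref{muiequa}, so that under \eqref{muiequa} one gets $\dot V\le-c\|\tilde{y}\|^{2}\le0$ for some $c>0$. I expect this cross-term bookkeeping --- in particular reconciling the per-agent block-diagonal filter structure with the globally coupled Laplacian weight $H$ --- to be the main technical obstacle.

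From $\dot V\le0$ and $V\ge0$, $V$ is bounded, hence so are $\tilde{\eta},\tilde{\chi},\tilde{y},\tilde{\theta}$; since $y_{N+1}$ in \eqref{leaderall} is a bounded sum of sinusoids and $\eta_{i},\chi_{i}$ in \eqref{neq4a} are its images through Hurwitz filters, $\eta_{i},\chi_{i}$ are bounded, so $\hat{\eta}_{i}=\tilde{\eta}_{i}+\eta_{i}$, $\hat{\chi}_{i}$, $\hat{y}_{i}$, $\hat{\theta}_{i}$ are all bounded for $t\ge0$, and therefore $\dot{\tilde{\eta}}_{i},\dot{\tilde{\chi}}_{i},\dot{\tilde{y}}_{i}$ and $\dot{\tilde{\theta}}_{i}=\kappa F\hat{\chi}_{i}z_{i}$ are bounded as well. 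Integrating $\dot V\le-c\|\tilde{y}\|^{2}$ yields $\tilde{y}\in L_{2}[0,\infty)$; as $\tilde{y}$ is uniformly continuous, Barbalat's lemma (applied to $\|\tilde{y}\|^{2}$) gives \eqref{eq101a}, and then, $A_{d}$ being Hurwitz and the $\tilde{\eta}$-, $\tilde{\chi}$-subsystems of \eqref{neq5} being driven by $\tilde{y}\to0$, a standard linear-systems argument gives \eqref{eq101b} and \eqref{eq101c}. Finally, rewriting the last line of \eqref{neq5} as
\begin{align*}
\hat{\chi}_{d}^{T}(I_{N}\otimes F^{T})\tilde{\theta}=\dot{\tilde{y}}-(a_{d}-\mu H)\tilde{y}-(I_{N}\otimes E^{T})\tilde{\eta}-\tilde{\chi}_{d}^{T}\big(\mathds{1}_{N}\otimes(F^{T}\theta)\big),
\end{align*}
the last three terms tend to $0$ by the above, and differentiating once more shows $\ddot{\tilde{y}}$ is bounded (all signals and their first derivatives being bounded), so $\dot{\tilde{y}}$ is uniformly continuous while $\int_{0}^{t}\dot{\tilde{y}}(\tau)\,d\tau=\tilde{y}(t)-\tilde{y}(0)$ converges; Barbalat's lemma then gives $\dot{\tilde{y}}\to0$, and extracting the $i$-th component yields $\hat{\chi}_{i}^{T}F^{T}\tilde{\theta}_{i}\to0$, i.e. \eqref{eq101d}.
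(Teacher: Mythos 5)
Your proposal follows essentially the same route as the paper's proof: the same $H$-weighted Lyapunov function with Riccati/bounded-real matrices for the $\tilde{\eta}$- and $\tilde{\chi}$-blocks and a $\tfrac{1}{\kappa}\tilde{\theta}^{T}\tilde{\theta}$ term, the same cancellation of the $\hat{\chi}_{d}$-coupled terms, Young's inequality plus $\lambda_{1}I\preceq H$ and $\|\theta\|^{2}\leq\pi$ to arrive at \eqref{muiequa}, and then Barbalat's lemma and input-to-state stability of the Hurwitz filter subsystems for \eqref{eq101a}--\eqref{eq101d}. The only differences are cosmetic (harmless $\tfrac12$ factors, and applying Barbalat to $\|\tilde{y}\|^{2}$ via the $L_{2}$ bound rather than to $\dot V$ as the paper does), so the plan is correct.
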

\begin{proof}Since $A_i$ is stable, for the positive numbers $\bar{\gamma}_{1}\geq\gamma_{i,1}$ and $\bar{\gamma}_{2}\geq\gamma_{i,2}$, where $\gamma_{i,1}$ and $\gamma_{i,2}$ are defined in \eqref{gamma12}, there always exist  positive definite matrices $P_i\in \mathds{R}^{(2l-1)\times (2l-1)}$ and $Q_i\in \mathds{R}%
^{(2l-1)\times (2l-1)}$ satisfying the following Riccati inequalities
 \begin{subequations}\label{maeqin}
\begin{align}
P_iA_i+A_i^{T}P_i+\frac{1}{\bar{\gamma}_{1}^2}P_iB_iB_i^{T}P_i+EE^T&<0, \\
Q_iA_i^{T}+A_iQ_i+\frac{1}{\bar{\gamma}_{2}^2}Q_iEE^TQ_i+F^T F&<0.
\end{align}
\end{subequations}
Consider the following Lyapunov function candidate for (\ref{neq5}):
\begin{equation}\label{reeq3b}
V=\tilde{y}^{T}H\tilde{y}+\tilde{\eta}^{T}P_d\tilde{\eta}+\tilde{\chi}^{T}Q_d \tilde{\chi}+\frac{%
1}{\kappa }\tilde{\theta}^{T}\tilde{\theta},
\end{equation}
where $H$ is defined in Section~\ref{ss1}, $$P_d=\mathrm{blkdiag}(P_{1},\ldots,P_{N})~~\text{and}~~Q_d=\mathrm{blkdiag}(Q_{1},\ldots,Q_{N}).$$
 Differentiating (\ref{reeq3b}) along the trajectory of \eqref{neq5} gives
\begin{align}
\dot{V}=& 2\tilde{y}^{T}H\dot{\tilde{y}}+2\tilde{\eta}^{T}P_d \dot{\tilde{\eta}}+2\tilde{\chi}^{T}Q_d \dot{\tilde{\chi}}+\frac{2}{\kappa }\tilde{\theta}^{T}%
\dot{\tilde{\theta}}  \notag \\
=& 2\tilde{y}^{T}H\Big[\left( I_{N}\otimes E^{T}\right) \tilde{\eta}+a_{d}\tilde{y}+\tilde{\chi}_{d}^{T}\left[\mathds{1}_{N}\otimes \left(F^T\theta\right)
\right]  \notag \\
& +\hat{\chi}_{d}^{T}\left( I_{N}\otimes F^{T}\right) \tilde{\theta}-\mu
H\tilde{y}\Big]  +2\tilde{\eta}^{T}\big[P_d A_d\tilde{\eta}+P_dB_d\tilde{y}\big]  \notag \\
& +2\tilde{\chi}^{T}\big[Q_d A_d^{T}\tilde{\chi}
+Q_d\left( I_{N}\otimes E\right) \tilde{y}\big]  \notag \\
& -2\tilde{\theta}^{T}\left( I_{N}\otimes F\right) \hat{\chi}_{d}H\tilde{y}
\notag \\
=& 2\tilde{y}^{T}Ha_{d}\tilde{y}-2\mu \tilde{y}^{T}H^{2}\tilde{y}  \notag
\\
& +\tilde{\eta}^{T}\left(P_d A_d+A_d^{T}P_d\right)
\tilde{\eta}  +\tilde{\chi}^{T}\left(Q_d A_d^{T}+A_d Q_d \right)
\tilde{\chi}  \notag \\
& +2\tilde{y}^{T}H\left( I_{N}\otimes E^{T}\right) \tilde{\eta}+2\tilde{y}%
^{T}H\tilde{\chi}_{d}^{T}\left[ \mathds{1}_{N}\otimes \left(F^T\theta\right) \right]
\notag \\
& +2\tilde{\eta}^{T}P_dB_d\tilde{y}+2\tilde{\chi}%
^{T}Q_d\left( I_{N}\otimes E\right) \tilde{y}  \notag \\
& +2\tilde{y}^{T}H\hat{\chi}_{d}^{T}\left( I_{N}\otimes F^{T}\right) \tilde{\theta}-2\tilde{\theta}^{T}\left( I_{N}\otimes F\right) \hat{\chi}_{d}H\tilde{y}.
\end{align}%
It is easy to verify that
\begin{align}
\tilde{y}^{T}H\hat{\chi}_{d}^{T}\left( I_{N}\otimes F^{T}\right) \tilde{%
\theta}=\tilde{\theta}^{T}\left( I_{N}\otimes F\right) \hat{\chi}_{d}H\tilde{y}.\label{rnda}
\end{align}%
Moreover, we have%
\begin{align}
&2\tilde{y}^{T}H\tilde{\chi}_{d}^{T}\left[ \mathds{1}_{N}\otimes \left(F^T\theta\right) \right] \leq\|H\tilde{y}\|^2 \|\theta\|^2 +\|(I_N\otimes F)\tilde{\chi}\|^2,\nonumber\\
&2\tilde{y}^{T}H\left(I_N\otimes
E^T\right)\tilde{\eta}\leq \tilde{y}^{T}H^2\tilde{y}+\tilde{\eta}^T\left(I_N\otimes
E E^T\right)\tilde{\eta},\label{vvinequal1a1}
\end{align}%
where the fact \begin{align}
\left\| \tilde{\chi}_d^T\big(\mathds{1}_{N}\otimes \left(F^T\theta\right)\big) \right\|&=\left\|\big(I_{N}\otimes \theta^T\big) (I_N\otimes F)\tilde{\chi} \right\|\nonumber\\
&\leq \|\theta\| \|(I_N\otimes F)\tilde{\chi}\|\nonumber
\end{align}
is used to obtain the first inequality.

From (\ref{rnda}), and (\ref{vvinequal1a1}),  we have
\begin{align}\label{vvinequal}
\dot{V}\leq&  2\tilde{y}^{T}Ha_{d}\tilde{y}-2\mu \tilde{y}^{T}H^{2}\tilde{y}+(1+\|\theta\|^2)\tilde{y}^{T}H^2\tilde{y}
\notag \\
&+\sum_{i=1}^{N}\Big[\tilde{\eta}^{T}_{i}\left( P_i A_i+A_i^{T}P_i \right)
\tilde{\eta}_{i}+2\tilde{\eta}_i^{T}P_i B_i\tilde{y}_i\notag \\
& +\tilde{\eta}_i^{T}E E^{T}\tilde{\eta}_i  \Big]+\sum_{i=1}^{N}\Big[\tilde{\chi}^{T}_i\left( Q_i A_i^{T}+A_i Q_i \right)\tilde{\chi}_i \notag \\
& +2\tilde{\chi}_i^{T}Q_i E \tilde{y}_i+\tilde{\chi}_{i}^{T}F^T F\tilde{\chi}_{i} \Big].
\end{align}
From \eqref{maeqin}, for $i=1,\ldots,N$, we have
\begin{align*}
&\tilde{\eta}_{i}^{T}\left( P_{i}A_{i}+A_{i}^{T}P_{i}\right) \tilde{\eta}%
_{i}+2\tilde{\eta}_{i}^{T}P_{i}B_{i}\tilde{y}_{i}+\tilde{\eta}_{i}^{T}EE^{T}\tilde{\eta}_{i} \leq \bar{\gamma}_{1}^{2}%
\tilde{y}_{i}^{T}\tilde{y}_{i}, \\
&\tilde{\chi}_{i}^{T}\left( Q_{i}A_{i}^{T}+A_{i}Q_{i}\right) \tilde{\chi}%
_{i}+2\tilde{\chi}_{i}^{T}Q_{i}E\tilde{y}_{i}+\tilde{\chi}_{i}^{T}F^T F\tilde{\chi}_{i} \leq \bar{\gamma}_{2}^{2}%
\tilde{y}_{i}^{T}\tilde{y}_{i}.
\end{align*}%
Applying the above two inequalities to \eqref{vvinequal}, we have%
\begin{align}\label{vvinequald}
\dot{V}\leq& 2\tilde{y}^{T}Ha_{d}\tilde{y}+(1+\|\theta\|^2-2\mu )\tilde{y}^{T}H^2\tilde{y}+\left({\bar{\gamma}_{1}}^2 +{\bar{\gamma}_{2}}^2\right)\tilde{y}^{T}\tilde{y}\notag\\
= &  \tilde{y}^{T}H\Big[\left(1+\|\theta\|^2-2\mu\right)I +\left({\bar{\gamma}_{1}}^2 +{\bar{\gamma}_{2}}^2\right)H^{-2}\Big] H\tilde{y}\notag\\
&+\tilde{y}^{T}H\big(a_dH^{-1}+H^{-1}a_d\big)H\tilde{y}.
\end{align}
Finally, since $\lambda_1I \leq H$, from \eqref{muiequa} and \eqref{vvinequald}, we have
\begin{align}
\dot{V}
\leq & \Big[ 1+\|\theta\|^2-2\mu  +2\bar{a}\lambda_1^{-1} +\left({\bar{\gamma}_{1}}^2 +{\bar{\gamma}_{2}}^2\right)\lambda_1^{-2}\Big]\tilde{y}^{T}H^2\tilde{y} \notag\\
\leq& 0,
\end{align}
where $\bar{a}=\max\{a_{1,1},\ldots,a_{N,1}\}$. Since $V$ is positive definite and $\dot{V}$ is negative semi-definite, $V$
is bounded, which means $\tilde{y}$, $\tilde{\eta}$, $\tilde{\chi}$ and $%
\tilde{\theta}$ are all bounded. From (\ref{neq5}), $\dot{\tilde{y}}$, $\dot{%
\tilde{\eta}}$, $\dot{\tilde{\chi}}$ and $\dot{\tilde{\theta}}$ are bounded,
which implies that $\ddot{V}$ is bounded. According to Barbalat's lemma, 
$\lim\limits_{t\rightarrow \infty }\dot{V}(t)=0$,
which implies (\ref{eq101a}). Thus, by (\ref{eqev}), we have
$
\lim\limits_{t\rightarrow \infty }z(t)=0
$,
which together with (\ref{reeq2abbbb}) yields $\lim\limits_{t\rightarrow
\infty }\dot{\tilde{\theta}}=0$.
 As both $A_i^T$ and $A_i$ are Hurwitz matrices, systems \eqref{reeq12a} and \eqref{reeq12b} are stable systems with a bounded input $\tilde{y}_i$. The
input is bounded for all $t\geq0$ and tends to zero as $t\rightarrow\infty$. We conclude that $\tilde{\chi}_i(t)$ and $\tilde{\eta}_i$ will decay to zero as $t\rightarrow\infty
$ from the input to state stability property, for $i=1,\ldots,N$.
 To show (\ref{eq101d}), differentiating $%
\dot{\tilde{y}}$ gives,
\begin{align}
\ddot{\tilde{y}}=& \left(I_N\otimes
E^T\right) \dot{\tilde{\eta}}+a_{d}\dot{\tilde{y}}+\dot{%
\tilde{\chi}}_{d}^{T}\left[\mathds{1}_N\otimes \left(F^T\theta\right)\right]-\mu H\dot{%
\tilde{y}}  \notag  \label{reeq6b} \\
& +\dot{\hat{\chi}}_{d}^{T}\left( I_{N}\otimes F^{T}\right) \tilde{\theta}+%
\hat{\chi}_{d}^{T}\left( I_{N}\otimes F^{T}\right) \dot{\tilde{\theta}}.
\end{align}%
We have shown that $\dot{{\tilde{\eta}}}%
$, $\dot{{\tilde{y}}}$, $\dot{{\tilde{\chi}}}$, $\dot{\hat{\chi}}$, ${\tilde{\theta}}$, ${\hat{\chi}}$, and $\dot{\tilde{\theta}}$ are
all bounded. Thus, $\ddot{\tilde{y}}$ is bounded. By using Barbalat's
lemma again, we have $\lim\limits_{t%
\rightarrow \infty }\dot{\tilde{y}}(t)=0$, which together with (\ref{eq101b}), (\ref{eq101c}) and (\ref{eq101a}%
) implies $$\lim\limits_{t\rightarrow
\infty }\hat{\chi}_{d}^{T}(t)\left( I_{N}\otimes F^{T}\right) \tilde{\theta}%
(t)=0.$$
Then, we have \eqref{eq101d}.
\end{proof}
\subsection{Parameter Estimation Error Analysis}
Lemma \ref{lemma4} does not guarantee $\lim\limits_{t\rightarrow\infty}
\tilde{\theta}(t) =0$. It is possible to make $\lim\limits_{t\rightarrow%
\infty} \tilde{\theta}(t)=0$ if the signal $ F \hat{\chi}_{i} (t)$ is persistently exciting. We need the following result which is taken from Lemma 2.4 in \cite{chen2015stabilization}.
\begin{lem}
\label{lemmape} Consider a continuously differentiable function $g:[0,+\infty)\mapsto \mathds{R}^n$ and a bounded piecewise continuous
function $f:[0,+\infty)\mapsto \mathds{R}^n$, which satisfy $\lim\limits_{t\rightarrow \infty}g^T(t)f(t)=0$. Then, $\lim\limits_{t\rightarrow
\infty}g(t)=0$ holds under the following two conditions:
\begin{enumerate}[(i)]
  \item $\lim\limits_{t\rightarrow \infty}\dot{g}(t)=0$;
  \item $f(t)$ is persistently exciting.
\end{enumerate}
\end{lem}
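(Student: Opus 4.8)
The plan is to prove the statement by contradiction, using persistent excitation to manufacture a lower bound on $\int g^T f$ over a short window that does not vanish, contradicting $\lim_{t\to\infty} g^T(t)f(t)=0$ together with condition (i). First I would observe that it suffices to show $g(t)\to 0$: suppose not, so there is $\varepsilon_0>0$ and a sequence $t_n\to\infty$ with $\|g(t_n)\|\geq \varepsilon_0$. The key structural idea is that over the interval $[t_n, t_n+T_0]$ (where $T_0$ comes from the definition of persistent excitation applied to $f$), the value of $g$ does not move far from $g(t_n)$, because $\dot g(t)\to 0$ by condition (i). Concretely, for any $\delta>0$ there is $N_\delta$ such that $\|\dot g(t)\|\leq \delta$ for all $t\geq N_\delta$, hence $\|g(t)-g(t_n)\|\leq \delta T_0$ for $t\in[t_n,t_n+T_0]$ once $t_n\geq N_\delta$.

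Next I would lower-bound the integral $\int_{t_n}^{t_n+T_0} f^T(\tau) g(\tau)\,d\tau$. Writing $g(\tau)=g(t_n)+\big(g(\tau)-g(t_n)\big)$ and using the persistent excitation inequality $\frac{1}{T_0}\int_{t_n}^{t_n+T_0} f(\tau)f^T(\tau)\,d\tau \geq \epsilon I_n$, the main term satisfies
\begin{align*}
\left\|\int_{t_n}^{t_n+T_0} f(\tau) f^T(\tau)\,d\tau\; g(t_n)\right\| \geq \epsilon T_0 \|g(t_n)\| \geq \epsilon T_0 \varepsilon_0,
\end{align*}
while the remainder is bounded by $\big(\sup_\tau\|f(\tau)\|\big) \delta T_0^2$ using boundedness of $f$; choosing $\delta$ small enough (independent of $n$) makes this remainder at most $\tfrac12 \epsilon T_0 \varepsilon_0$. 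Hence for all large $n$,
\begin{align*}
\left\|\int_{t_n}^{t_n+T_0} f(\tau)\big(f^T(\tau) g(t_n)\big)\,d\tau\right\| \geq \tfrac12 \epsilon T_0 \varepsilon_0 .
\end{align*}
On the other hand, I would split $f^T(\tau)g(t_n)$ as $f^T(\tau)g(\tau) - f^T(\tau)(g(\tau)-g(t_n))$; the first piece is small for large $\tau$ since $g^T f\to 0$, and the second is small by the $\dot g\to 0$ estimate, so the whole integrand has magnitude $o(1)$ uniformly on $[t_n,t_n+T_0]$ as $n\to\infty$. Therefore that same integral has norm $\to 0$, contradicting the lower bound. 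This contradiction forces $g(t)\to 0$.

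The main obstacle — though it is more bookkeeping than a genuine difficulty — is handling the vector nature of the persistent-excitation condition correctly: the PE inequality controls $\int f f^T$ as a matrix, not $\int f^T g$ directly, so one must left-multiply by $g(t_n)^T$ (or take norms as above) to extract a scalar lower bound $\epsilon T_0 \|g(t_n)\|^2$, and then compare it against the window-integral of $f^T(\tau) g(\tau)$ by inserting and estimating the cross term $f^T(\tau)\big(g(\tau)-g(t_n)\big)$. Getting the constants to line up (choosing $\delta$ before fixing how large $n$ must be, and keeping the bound on $\|f\|$ uniform) is the only place where care is needed. Since this lemma is quoted verbatim from \cite{chen2015stabilization}, I would ultimately just cite it, but the argument above is the natural self-contained route.
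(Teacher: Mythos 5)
Your argument is correct, but note that the paper itself does not prove this lemma at all: it is imported verbatim as Lemma 2.4 of \cite{chen2015stabilization}, so there is no in-paper proof to match against, and your closing remark that you would ultimately just cite that reference is exactly what the authors do. Your self-contained contradiction argument is the natural one and it works: freeze $g$ at $t_n$ over one excitation window $[t_n,t_n+T_0]$, get the lower bound $\bigl\|\bigl(\int_{t_n}^{t_n+T_0} f f^T d\tau\bigr) g(t_n)\bigr\|\geq \epsilon T_0\|g(t_n)\|$ directly from the persistent excitation inequality (via $\|Mv\|\,\|v\|\geq v^TMv$), and get the competing upper bound by writing $f^T(\tau)g(t_n)=f^T(\tau)g(\tau)-f^T(\tau)\bigl(g(\tau)-g(t_n)\bigr)$, where the first piece vanishes by hypothesis and the second is $O(\delta_n T_0)$ with $\delta_n=\sup_{t\geq t_n}\|\dot g(t)\|\to 0$; this yields the contradiction. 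Two cosmetic points: your remainder estimate should carry $\sup_\tau\|f(\tau)\|^2$ rather than $\sup_\tau\|f(\tau)\|$ (harmless, since $f$ is bounded), and the quantity you actually bound on both sides is the vector $\int_{t_n}^{t_n+T_0} f(\tau)\bigl(f^T(\tau)g(t_n)\bigr)d\tau$, not the scalar $\int f^T g\,d\tau$ announced at the start of that step — the exposition should be aligned with the displays, and once it is, the intermediate ``choose $\delta$ so the remainder is at most $\tfrac12\epsilon T_0\varepsilon_0$'' discussion in the lower-bound paragraph becomes unnecessary. A nice feature of your route, worth stating explicitly, is that boundedness of $g$ is never used, consistent with the lemma not assuming it.
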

\begin{thm}
\label{lemccci}Consider systems \eqref{leaderall}, \eqref{compensator2} and \eqref{neq4a}. For any $\hat{\eta} (0)$, $\hat{\chi} (0)$, $\hat{y}(0)$ and $\hat{\theta}(0)$, choose any $\kappa>0$ and a constant
 $\mu$ satisfying \eqref{muiequa}. Then,
 \begin{enumerate}[(i)]
 \item $F\hat{\chi}_i(t)$ is persistently exciting;
 \item
$\lim\limits_{t\rightarrow\infty} \tilde{\theta}(t)=0. $\end{enumerate}
\end{thm}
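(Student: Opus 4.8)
The plan is to obtain part (ii) as a direct consequence of part (i) together with Lemma \ref{lemmape} and the limits already established in Lemma \ref{lemma4}; the genuinely new content is the persistence-of-excitation claim in part (i), and that is where I would spend the effort.

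For part (i), I would first reduce the claim to a statement about $\chi_i$ rather than $\hat\chi_i$. Writing $\hat\chi_i = \chi_i + \tilde\chi_i$ and recalling from Lemma \ref{lemma4} that $\tilde\chi_i(t)\to 0$, we have $F\hat\chi_i(t) - F\chi_i(t)\to 0$; since a vanishing additive perturbation of a bounded persistently exciting signal is again persistently exciting (split $\int_t^{t+T_0} F\hat\chi_i\,(F\hat\chi_i)^{T}$ into the $F\chi_i$ term plus cross terms, and pick $t_0$ large enough that the latter are dominated by half the excitation level, using that $F\chi_i$ is bounded because $A_i$ is Hurwitz and $y_{N+1}$ is bounded), it suffices to prove $F\chi_i$ is persistently exciting. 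Now $\chi_i$ is, by \eqref{neq4ia}, the output of the stable linear filter $\dot\chi_i = A_i^{T}\chi_i + E y_{N+1}$ driven by the leader signal $y_{N+1}$, which by assumption is sufficiently rich of order $2l$, i.e. a stationary signal carrying $l$ distinct frequencies $\omega_1,\dots,\omega_l$ (equivalently $2l$ spectral lines at $\pm\omega_k$). Modulo an exponentially decaying transient, $F\chi_i$ is then a sum of $l$ vector sinusoids whose coefficients are the real and imaginary parts of $F(j\omega_k I - A_i^{T})^{-1}E\in\mathds{C}^{l}$, and persistence of excitation of $F\chi_i$ is equivalent to these $2l$ real vectors spanning $\mathds{R}^{l}$. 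This non-degeneracy is exactly what the filtered transformation of \cite{marino2002global} is built to ensure: it renders $\theta$ identifiable from the linear regression $\dot y_{N+1} - E^{T}\eta_i - a_{i,1}y_{N+1} = (F\chi_i)^{T}\theta$, so the transfer vector $F(sI-A_i^{T})^{-1}E$ cannot annihilate any of the $l$ frequencies present in $y_{N+1}$ and the associated coefficient vectors collectively span $\mathds{R}^{l}$. Invoking the standard fact that a sufficiently rich input of order $2l$ passed through such a non-degenerate stable filter yields a persistently exciting output (as in \cite{marino2002global} and \cite[Ch.~2]{sastry2011adaptive}) then gives part (i).

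For part (ii), I would apply Lemma \ref{lemmape} with $g = \tilde\theta_i$ and $f = F\hat\chi_i$. The function $\tilde\theta_i$ is continuously differentiable with $\dot{\tilde\theta}_i = \kappa F\hat\chi_i z_i$, and $f = F\hat\chi_i$ is bounded and continuous since $\hat\chi_i$ is bounded by Lemma \ref{lemma4}; moreover $g^{T}f = \tilde\theta_i^{T}F\hat\chi_i = \hat\chi_i^{T}F^{T}\tilde\theta_i \to 0$ by \eqref{eq101d}. Condition (i) of Lemma \ref{lemmape} holds because the proof of Lemma \ref{lemma4} gives $\tilde y(t)\to 0$, hence $z(t) = -H\tilde y(t)\to 0$ by \eqref{eqev}, so $\dot{\tilde\theta}_i(t) = \kappa F\hat\chi_i(t) z_i(t)\to 0$; condition (ii) is precisely part (i) of this theorem. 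Lemma \ref{lemmape} then yields $\tilde\theta_i(t)\to 0$ for every $i = 1,\dots,N$, which is $\tilde\theta(t)\to 0$.

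The main obstacle is the core of part (i): rigorously passing from ``$y_{N+1}$ sufficiently rich of order $2l$'' to ``$F\chi_i$ is persistently exciting.'' This rests on the spectral characterization of persistence of excitation together with the non-degeneracy of the transfer vector $F(sI-A_i^{T})^{-1}E$ at the unknown frequencies $\pm j\omega_k$ — i.e., on the identifiability built into the filtered transformation of \cite{marino2002global}. Everything else (the reduction from $\hat\chi_i$ to $\chi_i$, and the Barbalat/ISS-type bookkeeping feeding Lemma \ref{lemmape}) is routine once Lemma \ref{lemma4} is available.
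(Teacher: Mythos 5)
Your part (ii) and your reduction from $\hat{\chi}_i$ to $\chi_i$ are fine and essentially coincide with the paper (the paper invokes Lemma 3.2 of \cite{wanghuang2018} for the statement that a perturbation vanishing at infinity of a bounded persistently exciting signal is still persistently exciting, which is what you sketch directly). The genuine gap is exactly where you admit it lies: the claim that the vectors $\mathrm{Re}\,F(j\omega_k I-A_i^{T})^{-1}E$, $\mathrm{Im}\,F(j\omega_k I-A_i^{T})^{-1}E$, $k=1,\dots,l$, span $\mathds{R}^{l}$ is never established. Justifying it by ``the identifiability built into the filtered transformation of \cite{marino2002global}'' is not a proof, and it is essentially circular: identifiability of $\theta$ from the regression $\dot y_{N+1}-E^{T}\eta_i-a_{i,1}y_{N+1}=(F\chi_i)^{T}\theta$ at the excited frequencies is equivalent to the non-degeneracy (persistence of excitation of the regressor) that you are trying to prove, so it cannot serve as a premise. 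The standard filtered-signal PE theorems you appeal to (e.g.\ Theorem 5.2.1 type results in \cite{ioannou1996robust}) require precisely a linear-independence hypothesis on the transfer vector at distinct frequencies, which must be verified for $F(sI-A_i^{T})^{-1}E$, not assumed.

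The paper avoids this issue by a simpler two-step argument: since $(A_i^{T},E)$ is a controllable canonical pair and $y_{N+1}$ is sufficiently rich of order $2l\geq 2l-1$, Theorem 2.7.2 of \cite{sastry2011adaptive} gives persistence of excitation of the \emph{full} filter state $\chi_i\in\mathds{R}^{2l-1}$; then, because $F\in\mathds{R}^{l\times(2l-1)}$ has full row rank, $F\chi_i$ is persistently exciting (Lemma 4.8.3 of \cite{ioannou1996robust} or Lemma 1 of \cite{narendra1987persistent}), and finally \eqref{eq101c} plus the perturbation lemma handles $\hat{\chi}_i$. If you want to keep your frequency-domain route, the missing step can be closed by direct computation: $(sI-A_i^{T})^{-1}E=\mathrm{col}\left(s^{2l-2},s^{2l-3},\ldots,1\right)/p_i(s)$ with $p_i(s)=s^{2l-1}+a_{i,1}s^{2l-2}+\cdots+a_{i,2l-1}$, so $F(sI-A_i^{T})^{-1}E=-\mathrm{col}\left(s^{2l-2},s^{2l-4},\ldots,s^{2},1\right)/p_i(s)$; at $s=j\omega_k$ the numerator is a \emph{real} Vandermonde-type vector in the distinct values $\omega_k^{2}$, multiplied by the nonzero complex scalar $-1/p_i(j\omega_k)$, and $l$ such vectors span $\mathds{R}^{l}$. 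Without this (or the paper's state-space argument), part (i) — and hence part (ii), which otherwise follows correctly from Lemma \ref{lemmape}, \eqref{eq101d}, \eqref{eqev} and \eqref{reeq2abbbb} — is not proved.
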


\begin{proof}(i): For the system (\ref{neq4ia}), $A_i^T\in \mathds{R}^{(2l-1)\times (2l-1)}$ is a Hurwitz matrix. It can be easily verified that $(A_i^T,E)$ is in the controllable canonical form, 
 and the input $y_{N+1}(t)$ is sufficiently rich of order $2l$, which is greater than $2l-1$. Using Theorem 2.7.2 in \cite{sastry2011adaptive}, we have $\chi_i(t)$
is persistently exciting.
Since $F\in \mathds{R}^{l\times (2l-1)}$ is a full row rank constant matrix with $\mathrm{rank}(F)=l$,
$F\chi_i(t)$ is persistently exciting according to Lemma 4.8.3 in \cite{ioannou1996robust} or Lemma 1 in \cite{narendra1987persistent}.
From (\ref{eq101c}) in Lemma~\ref{lemma4}, we have
$$
\lim\limits_{t\rightarrow\infty}F\left(\hat{\chi}_i(t)-\chi_i(t)
\right)=0,~~i=1,\cdots,N.
$$
Then, by Lemma 3.2 in \cite{wanghuang2018}, $F\hat{\chi}_i(t)$
is persistently exciting for $i=1,\cdots,N$.

(ii):
From (\ref{eq101d}) in Lemma~\ref{lemma4}, we have
$\lim\limits_{t\rightarrow\infty}\hat{\chi}_i^T(t)F ^T\tilde{\theta}_i(t)= 0$, $i=1,\cdots,N.$
Since $F\hat{\chi}_i(t)$ is persistently exciting and $
\lim\limits_{t\rightarrow\infty} \dot{\tilde{\theta}}_{i}(t)=0$ following from \eqref{reeq2abbbb}, we have $
\lim\limits_{t\rightarrow\infty} \tilde{\theta}_{i}(t)=0$ for $i=1,\dots,N$, according to Lemma \ref{lemmape}.
\end{proof}
\subsection{State Estimation Error Analysis}
We now show that the estimation error between $\hat{v}_{i}$ in \eqref{distristab} and $v$ in \eqref{leader2} tends to zero as $t\rightarrow \infty$. Let $\tilde{v}_i=\hat{v}_i-v$. Then, we have the following
equations
\begin{align}
\dot{\tilde{v}}_i =& \left(M -L_iC \right)\tilde{v}_i+L_i\tilde{y}_i-\sum_{k=1}^{l}\big(\hat{\theta}_{i,k}E_{2k}\hat{y}_i-\theta_k E_{2k}y_{N+1}
\big)  \notag \\
=&\left(M -L_iC \right)\tilde{v}_i+L_i\tilde{y}_i-\sum_{k=1}^{l}\big(\tilde{\theta}_{i,k}E_{2k}\hat{y}_i+\theta_kE_{2k}\tilde{y}_i\big),\notag\\
&~~~~ i=1,\cdots,N.  \label{vobsev1}
\end{align}
\begin{lem}
\label{leadervob}Consider the systems \eqref{leaderall},
\eqref{compensator2}, \eqref{distrista}, \eqref{neq4a} and \eqref{vobsev1}. For any $\hat{\eta}(0)$, $\hat{\chi} (0)$, $\hat{y}(0)$, $\hat{v}_i(0)$ and $\hat{\theta}(0)$, choose $\kappa>0$ and
 a constant $\mu$ satisfying \eqref{muiequa}.
Then,
$$\lim\limits_{t\rightarrow\infty}\tilde{v}_i(t)= 0,$$ for $i=1,\cdots,N$.
\end{lem}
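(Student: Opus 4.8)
The plan is to treat the error system \eqref{vobsev1} as a Hurwitz linear time-invariant system driven by a vanishing input, and then invoke the same input-to-state-stability reasoning already used in the proof of Lemma~\ref{lemma4} for \eqref{reeq12a}--\eqref{reeq12b}. First I would record that, by construction, $L_i$ is chosen so that $M-L_iC$ is Hurwitz; hence the unforced dynamics are exponentially stable, i.e.\ there are constants $c_i\geq 1$ and $\sigma_i>0$ with $\|e^{(M-L_iC)t}\|\leq c_i e^{-\sigma_i t}$ for all $t\geq 0$.

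Next I would isolate the forcing term
$$w_i(t)\defeq L_i\tilde y_i(t)-\sum_{k=1}^{l}\big(\tilde\theta_{i,k}(t)E_{2k}\hat y_i(t)+\theta_k E_{2k}\tilde y_i(t)\big),$$
so that $\dot{\tilde v}_i=(M-L_iC)\tilde v_i+w_i$, and show that $w_i$ is bounded on $[0,\infty)$ and satisfies $\lim_{t\to\infty}w_i(t)=0$. For boundedness: $\tilde y_i$ is bounded by Lemma~\ref{lemma4}, $y_{N+1}$ is bounded since it is a finite sum of sinusoids, hence $\hat y_i=\tilde y_i+y_{N+1}$ is bounded; $\tilde\theta_i$ is bounded by Lemma~\ref{lemma4}; and $\theta$ is a fixed vector with $\|\theta\|^2\leq\pi$. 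For convergence to zero: Lemma~\ref{lemma4} gives $\tilde y_i(t)\to 0$, which annihilates the $L_i\tilde y_i$ and $\theta_k E_{2k}\tilde y_i$ terms, while Theorem~\ref{lemccci} gives $\tilde\theta_i(t)\to 0$, which, multiplied by the bounded signal $\hat y_i$, annihilates the remaining terms.

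Finally I would conclude with the standard fact that a Hurwitz LTI system with bounded input tending to zero has state tending to zero. Concretely, from the variation-of-constants formula
$$\tilde v_i(t)=e^{(M-L_iC)t}\tilde v_i(0)+\int_0^t e^{(M-L_iC)(t-\tau)}w_i(\tau)\,d\tau,$$
the first term decays exponentially; for the convolution, fixing a large $T$ and splitting the integral at $T$, the contribution over $[T,t]$ is bounded by $\tfrac{c_i}{\sigma_i}\sup_{\tau\geq T}\|w_i(\tau)\|$, which is arbitrarily small for $T$ large, while the contribution over $[0,T]$ is bounded by $c_i e^{-\sigma_i(t-T)}\int_0^T\|w_i(\tau)\|\,d\tau$, which vanishes as $t\to\infty$ for fixed $T$ by boundedness of $w_i$. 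Hence $\tilde v_i(t)\to 0$ for each $i=1,\dots,N$.

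I do not anticipate a real obstacle here; the one point that needs care is that in \eqref{vobsev1} the parameter error $\tilde\theta_{i,k}$ multiplies $\hat y_i$ rather than $\tilde y_i$, so one must use the \emph{boundedness} of $\hat y_i=\tilde y_i+y_{N+1}$ (which does not itself converge) together with $\tilde\theta_i\to 0$, rather than any convergence of $\hat y_i$. Everything else is a routine application of results already established in Lemma~\ref{lemma4} and Theorem~\ref{lemccci}.
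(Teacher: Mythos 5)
Your proposal is correct and follows essentially the same route as the paper: both view \eqref{vobsev1} as a Hurwitz system ($M-L_iC$ stable) driven by the inputs $L_i\tilde y_i$ and $\sum_{k=1}^{l}\big(\tilde\theta_{i,k}E_{2k}\hat y_i+\theta_k E_{2k}\tilde y_i\big)$, which are bounded and vanish asymptotically by Lemma~\ref{lemma4} and Theorem~\ref{lemccci}, and then conclude via the input-to-state stability property. The only difference is that you spell out the variation-of-constants/convolution-splitting argument (and the careful point about $\hat y_i$ being merely bounded), which the paper compresses into a citation of ISS.
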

\begin{proof}
By Lemma \ref{lemma4} and Theorem \ref{lemccci}, both $\tilde{\theta}_i(t)$ and $\tilde{y}_i(t)$ converge to zero as $t\rightarrow\infty$, for $i=1,\ldots,N$. As $M-L_i C$
is a Hurwitz matrix, the system (\ref{vobsev1}) could be viewed as a stable
system with $\sum_{k=1}^{l}\left[\tilde{\theta}_{i,k}E_{2k}\hat{y}_i+\theta_kE_{2k}\tilde{y}_i\right]$ and $L\tilde{y}_i$ as the inputs. The
inputs are bounded for all $t\geq0$ and tend to zero as $t\rightarrow\infty$. We conclude that $\tilde{v}_i(t)$ will decay to zero as $t\rightarrow\infty
$ from the input to state stability property, for $i=1,\ldots,N$.
\end{proof}
\subsection{Output Consensus Analysis}

Motivated by the
output regulation theory in \cite{isidori1990output,huang2004nonlinear}, the regulator
equations associated with follower $i$ in \eqref{MARINEVESSEL1} and the leader
system in \eqref{leader2} are defined by (\ref{regulatorequa}).

Let $$f(v,\theta)=\mathrm{col }
\left(f_1(v,\theta),f_2(v,\theta),\ldots,f_r(v,\theta)\right),
$$
where $f_1(v,\theta)$, $f_2(v,\theta)$, $\ldots$, $f_r(v,\theta)$ are defined in \eqref{regulatorequa}.
In order to solve Problem \ref{ldlesp}, we perform the following coordinate
transformation for the follower system \eqref{MARINEVESSEL1}:
$$\bar{x}_{i}=x_i-f(v,\theta),$$
where
$\bar{x}_i=\mathrm{col}(\bar{x}_{i,1},\ldots,\bar{x}_{i,r})$ for $ i=1,\dots,N$.
Then, we can obtain the following error system
\begin{subequations}
\begin{align}
\dot{\bar{x}}_{i,s}&=\bar{x}_{i,s+1},~~~~s=1,\ldots,r-1, \\
\dot{\bar{x}}_{i,r}&=u_i-f_{r+1}(v,\theta).
\label{errorsystem}
\end{align}
\end{subequations}

\begin{thm}
Consider systems \eqref{leaderall}, \eqref{MARINEVESSEL1} and a digraph $%
\mathcal{\bar{\mathcal{G}}}$. For any $\hat{\eta}(0)$, $\hat{\chi} (0)$, $\hat{\theta}(0)$, $\hat{y}(0)$, $\hat{v%
}_i(0)$, $x_i(0)$, $i=1\dots,N$, choose $\kappa>0$ and 
 a constant $\mu$ satisfying \eqref{muiequa},
Problem \ref{ldlesp} is solved by the distributed adaptive observer (\ref{compensator2}) and the distributed control law %
\eqref{distrista}.
\end{thm}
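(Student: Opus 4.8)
The plan is to treat the closed loop as a cascade. The distributed adaptive observer \eqref{compensator2} and the internal copies $\hat{v}_i$ in \eqref{distristab} are driven only by neighbours' estimated outputs and hence evolve autonomously; their behaviour has already been characterized, so it remains to analyze the plant of each follower driven by the resulting signals. First I would invoke Lemma~\ref{lemma4}, Theorem~\ref{lemccci} and Lemma~\ref{leadervob} together: for every $i=1,\dots,N$ the signals $\hat{y}_i,\hat{\eta}_i,\hat{\chi}_i,\hat{\theta}_i,\hat{v}_i$ are bounded on $[0,\infty)$ and $\tilde{y}_i(t)\to0$, $\tilde{\theta}_i(t)\to0$, $\tilde{v}_i(t)\to0$ as $t\to\infty$. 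The parameter-estimation part of Problem~\ref{ldlesp} then follows because, through the characteristic polynomial \eqref{char1}, $\theta$ is the vector of elementary symmetric functions of $\omega_1^2,\dots,\omega_l^2$; since the frequencies are distinct, the roots of the associated polynomial depend continuously on its coefficients, so $\hat{\theta}_i(t)\to\theta$ yields $\hat{\omega}_{i,k}(t)\to\omega_k$.

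Second, I would pass to the error coordinates $\bar{x}_i=x_i-f(v,\theta)$ introduced before the theorem and analyze the tracking error. Substituting the control law \eqref{distristaa} into the error system \eqref{errorsystem}, writing $x_{i,s}=\bar{x}_{i,s}+f_s(v,\theta)$ and using $\alpha_{r+1}=1$, the $x_{i,s}$-feedback and the $f_s$-terms telescope and the $\bar{x}_i$-dynamics collapse to the companion form
\begin{align*}
\dot{\bar{x}}_{i,s}=\bar{x}_{i,s+1},\quad s=1,\dots,r-1,\qquad \dot{\bar{x}}_{i,r}=-\sum_{s=1}^{r}\alpha_s\bar{x}_{i,s}+d_i(t),
\end{align*}
with $d_i(t)=\sum_{s=1}^{r+1}\alpha_s\big(f_s(\hat{v}_i,\hat{\theta}_i)-f_s(v,\theta)\big)$. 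Its characteristic polynomial is $\lambda^r+\alpha_r\lambda^{r-1}+\cdots+\alpha_1$, which is Hurwitz by the choice of $\alpha_1,\dots,\alpha_r$ in \eqref{alphacontrol}. It therefore suffices to show that $d_i(t)$ is bounded and converges to zero.

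For that step I would use the structure $f_s(v,\theta)=Cg(\theta)^{\,s-1}v$, which is linear in $v$ and polynomial in $\theta$, together with the boundedness of $v(t)$: the eigenvalues of $g(\theta)$ in \eqref{skewopen} are $\pm\mathrm{i}\omega_1,\dots,\pm\mathrm{i}\omega_l$ and are distinct, so $g(\theta)$ is diagonalizable and $e^{g(\theta)t}$ is bounded. Hence $f_s(v,\theta)$ is bounded, $f_s(\hat{v}_i,\hat{\theta}_i)$ is bounded because $\hat{v}_i$ and $\hat{\theta}_i$ are, and since $\hat{v}_i-v=\tilde{v}_i\to0$, $\hat{\theta}_i-\theta=\tilde{\theta}_i\to0$ and each $f_s$ is locally Lipschitz, $d_i(t)\to0$. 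By the input-to-state stability of the Hurwitz companion system, $\bar{x}_i(t)$ is bounded and $\bar{x}_i(t)\to0$. Finally $x_i=\bar{x}_i+f(v,\theta)$ is bounded, and since $f_1(v,\theta)=Cv=y_{N+1}$ we obtain $y_i(t)-y_{N+1}(t)=\bar{x}_{i,1}(t)\to0$; together with the parameter convergence this shows that \eqref{compensator2} and \eqref{distrista} solve Problem~\ref{ldlesp}.

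I expect the main obstacle to be the bookkeeping in the second step --- verifying that the feedback \eqref{distristaa} is designed precisely so that the $x_{i,s}$-terms and the $f_s$-terms cancel, leaving the residual forcing exactly as $\sum_{s}\alpha_s\big(f_s(\hat{v}_i,\hat{\theta}_i)-f_s(v,\theta)\big)$. Once that identity is in hand, boundedness of $v$ and continuity of the $f_s$ make the decay of $d_i$ immediate, and the remainder is a routine ISS estimate combined with the already-established observer convergence.
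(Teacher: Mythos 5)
Your proposal is correct and follows essentially the same route as the paper: the same error coordinates $\bar{x}_i=x_i-f(v,\theta)$, the same substitution of \eqref{distristaa} to obtain a Hurwitz companion system driven by $\sum_{s=1}^{r+1}\alpha_s\bigl(f_s(\hat{v}_i,\hat{\theta}_i)-f_s(v,\theta)\bigr)$, the same appeal to Lemma~\ref{lemma4}, Theorem~\ref{lemccci} and Lemma~\ref{leadervob} to make that forcing vanish, and the same input-to-state stability conclusion. Your extra details (boundedness of $v$ via the spectrum of $g(\theta)$, local Lipschitzness of the $f_s$, and the continuity-of-roots step recovering $\hat{\omega}_{i,k}\rightarrow\omega_k$ from $\hat{\theta}_i\rightarrow\theta$) merely make explicit what the paper leaves implicit.
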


\begin{proof}
Substituting \eqref{distristaa} into the error system \eqref{errorsystem}
yields the following system
\begin{align}  \label{fitxe0}
\dot{\bar{x}}_i=\Phi\bar{x}_i+g_i(t),~~~~i=1,\ldots,N,
\end{align}
where $D=\mathrm{col}\left(0_{(r-1) \times 1},1\right)$,
\begin{align} 
\Phi=&\left[
\begin{array}{c|c}
0 & I_{r-1} \\ \hline
-\alpha_1 & -\alpha_2,~\ldots,~-\alpha_r \\
\end{array}
\right],\nonumber\\
g_i(t)=&D\sum_{s=1}^{r+1}\alpha_s\left(f_{s}(\hat{v}_i,\hat{\theta}%
_i)-f_{s}(v,\theta)\right).\nonumber
\end{align}
For $i=1,\dots,N$, we can rewrite $g_i(t)$ as%
\begin{align}  \label{fitxe}
g_i(t)=&D\sum_{s=1}^{r+1}\alpha_s\left(f_{s}(\tilde{v}_i+v,\tilde{%
\theta}_i+\theta)-f_{s}(v,\theta)\right)
\end{align}
By Lemma \ref{lemma4}, Theorem~\ref{lemccci} and Lemma \ref{leadervob}, for
any $\hat{\eta}(0)$, $\hat{\chi} (0)$, $\hat{v} (0)$, $\hat{y}(0)$ and $\hat{\theta}(0)$,
since $\kappa>0$ and $\mu $ satisfying (\ref{muiequa}), $\hat{v}(t)$ and $\hat{\theta}(t)$ are bounded for all $t\geq 0$ and satisfy
\begin{equation*}
\lim_{t\rightarrow\infty}\tilde{v}_i(t)=0,~~~~\lim_{t\rightarrow\infty}%
\tilde{\theta}_i(t)=0.
\end{equation*}
Thus, we have
$\lim\limits_{t\rightarrow\infty}g_i(t)=0$, $i=1,\dots,N.$
As $\Phi$ is a Hurwitz matrix, the system \eqref{fitxe0} can be viewed as a stable system
with $g_i(t)$ as the input. Since this input is bounded for all $t\geq0$ and
tends to zero as $t\rightarrow\infty$, we can conclude that $\lim\limits_{t\rightarrow\infty}\bar{x}_i(t)=0$ from the input to state stability property. For $i=1,\ldots,N$, $\lim\limits_{t\rightarrow\infty}\bar{x}_i(t)=0$ implies
$$\lim\limits_{t\rightarrow\infty}\bar{x}_{i,1}(t)=0.$$
 Therefore, Problem \ref{ldlesp} is solved.
\end{proof}
\section{Simulation}\label{section5}
Consider a multi-agent system with five followers and one leader, where the underlying
communication topology is shown in Fig.~\ref{fig1} with $\lambda_1=0.5188$.
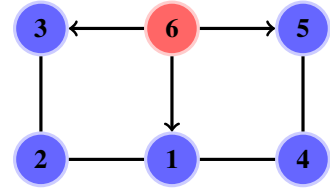
\begin{figure}[htbp]
\begin{center}
\begin{tikzpicture}[transform shape]
    \centering%
    \node (1) [circle, draw=blue!20, fill=blue!60, very thick, minimum size=7mm] {\textbf{1}};
    \node (2) [circle, left=of 1, draw=blue!20, fill=blue!60, very thick, minimum size=7mm] {\textbf{2}};
    \node (3) [circle, above=of 2, draw=blue!20, fill=blue!60, very thick, minimum size=7mm] {\textbf{3}};
    \node (4) [circle, right=of 1, draw=blue!20, fill=blue!60, very thick, minimum size=7mm] {\textbf{4}};
     \node (5) [circle, above=of 4, draw=blue!20, fill=blue!60, very thick, minimum size=7mm] {\textbf{5}};
    \node(6) [circle, above=of 1, draw=red!20, fill=red!60, very thick, minimum size=7mm]{\textbf{6}};
    \draw[ very  thick,->,  right] (6) edge (1);
    \draw[ very  thick,-,  right] (2) edge (1);
    \draw[ very  thick,-,  right] (1) edge (4);
    \draw[ very  thick,->,  right] (6) edge (3);
    \draw[ very  thick,-,  right] (2) edge (3);
    \draw[ very  thick,-,  right] (4) edge (5);
    \draw[ very  thick,->,  right] (6) edge (5);
\end{tikzpicture}
\end{center}
\caption{ Communication topology $\bar{\mathcal{G}}$}
\label{fig1}
\end{figure}

The output signal of the leader can be described by the following system%
\begin{equation}
y_{N+1}= \varphi_1\sin(\omega_{1} t+ \psi_1)+ \varphi_2\sin(\omega_{2} t +
\psi_2),\label{outsimu}
\end{equation}
where $\omega_{1}$, $\omega_{2}$ $\varphi_1$, $\varphi_2$, $\psi_1$ and $\psi_2$ are arbitrary unknown real numbers with $\omega_{1},\omega_{2}\in(0,1.5]$. Here $\theta_1=\omega_1^2+\omega_2^2$ and $\theta_2=\omega_1^2\omega_2^2$ according to \eqref{char1}. Thus, $\pi=45.8789$ from \eqref{pi1}.

The dynamics of the followers are given below:
\begin{align}\label{numerica1}
\dot{x}_{i,1} &= x_{i,2},~~\dot{x}_{i,2} =u_i,\notag \\
y_i&=x_{i,1},~~i=1,\ldots,5.
\end{align}

Five agents choose the vectors $a_1=\mathrm{col}\left(2.5,2.49,1.49\right)$, $a_2=\mathrm{col}(2,2,1.5)$, $a_3=\mathrm{col}(2,2.5,1.2)$, $a_4=\mathrm{col}(2,3,2)$ and $a_5=\mathrm{col}(1.5,2,1)$ such that all the roots of the polynomial equation \eqref{aplychar1} have negative real parts, respectively.
Then, we can calculate $\bar{a}=2.5$, $\gamma_{1,1}=2.7039$, $\gamma_{2,1}=2.4526$, $\gamma_{3,1}=2.3007$, $\gamma_{4,1}=2.6303$, $\gamma_{5,1}=2.1196$, $\gamma_{1,2}=0.8037$, $\gamma_{2,2}=1.2680$, $\gamma_{3,2}= 0.8333$, $\gamma_{4,2}=0.9132$ and $\gamma_{5,2}= 1.3219$ from \eqref{gamma12}. Thus, we can choose $\gamma_{1}=2.7039$ and $\gamma_{2}=  1.3219$.
Hence, we have $\mu> 45.0857$ from \eqref{muiequa}.
 The distributed adaptive observer \eqref{compensator2} can be
designed with $\mu =56$ and $\kappa=500$. The distributed control law can be designed in the form of \eqref{distrista}, where
$\alpha_1=6$ and $\alpha_2=11$ are chosen such that all the roots of the polynomial equation \eqref{alphacontrol} have negative real parts, $E_{2}=\mathrm{col}\left(0,1,0,0\right)$ and $%
E_{4}=\mathrm{col}\left(0,0,0,1\right)$, and $L_i=%
\mathrm{col}\left(12,54,108,81\right)$ such that $M-L_iC$ is a Hurwitz matrix, where  $M$ and $C$ are defined in \eqref{MC}, $i=1,\ldots,5$.  

Simulation is conducted
with the following initial conditions: $\hat{\chi}_{i}(0)=0$, $\hat{\eta}_{i}(0)=0$, $\hat{y}_i(0)=0$,  $x_{i}(0)=0$, and $\hat{\theta}_i(0)=0$, $i=1,\dots,5$. Fig.~\ref{fig2} shows the tracking
errors $e_{i}(t)=y_{i}(t)-y_{N+1}(t)$ for $i=1,\dots,5$, which all converge to the origin as time $t\rightarrow\infty$. The unknown amplitudes and phases of $y_{N+1}$ in \eqref{outsimu} are $\varphi_1=5$, $\varphi_2=2$, $\psi_1=0$ and $\psi_2=0$.
\begin{figure}[ht]
\centering
\epsfig{figure=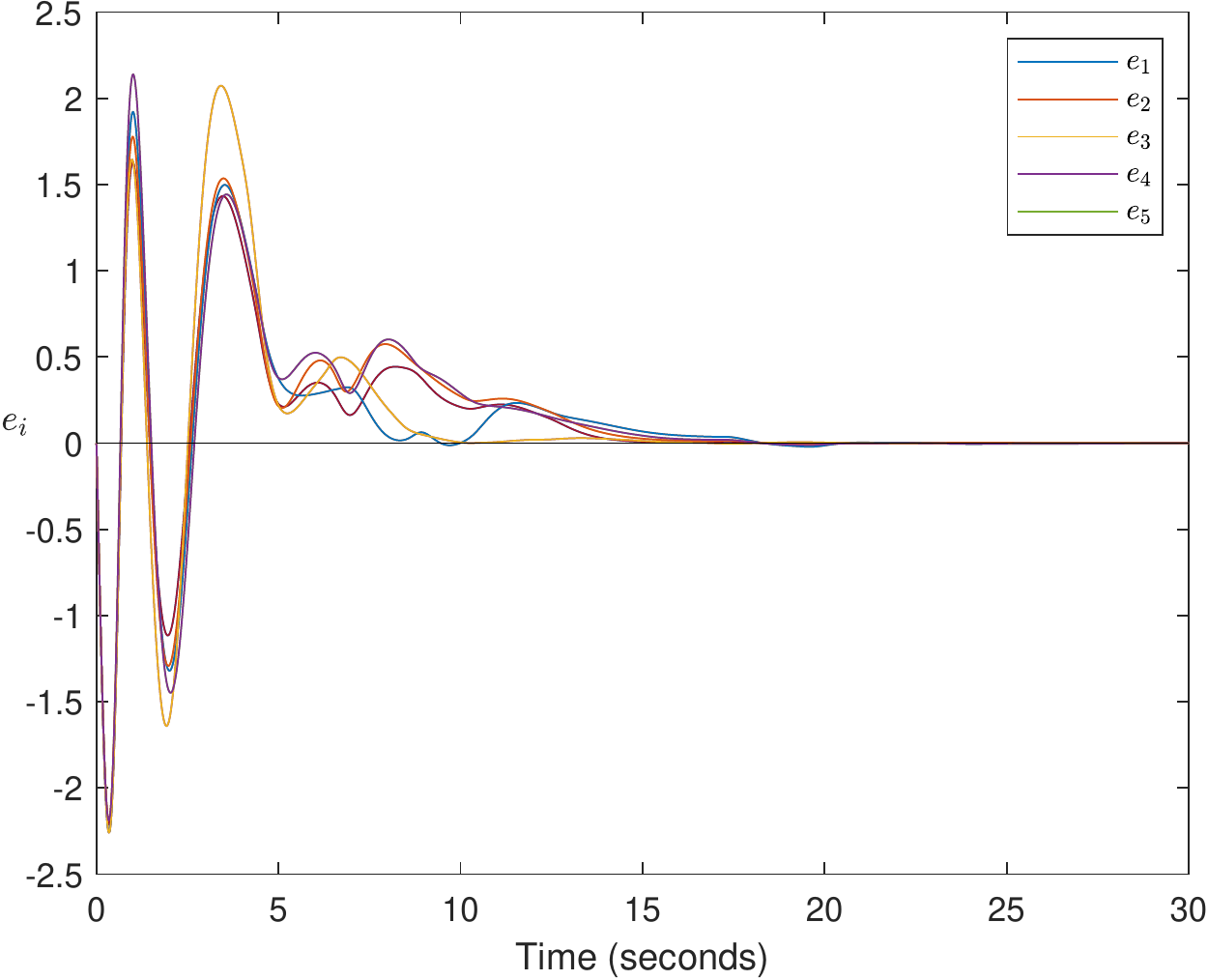,height=2.7in}  
\caption{Trajectories of $e_i(t)$, $i=1,\ldots,5$}
\label{fig2}
\end{figure}

Fig.~\ref{figthe} shows the trajectories of $\|\hat{\theta}_i(t)-\theta\|$ of each agent. It shows that the proposed distributed observer can estimate the actual unknown parameters $\theta=\mathrm{col}(3.25,2.25)$ asymptotically.
\begin{figure}[ht]
\centering
\epsfig{figure=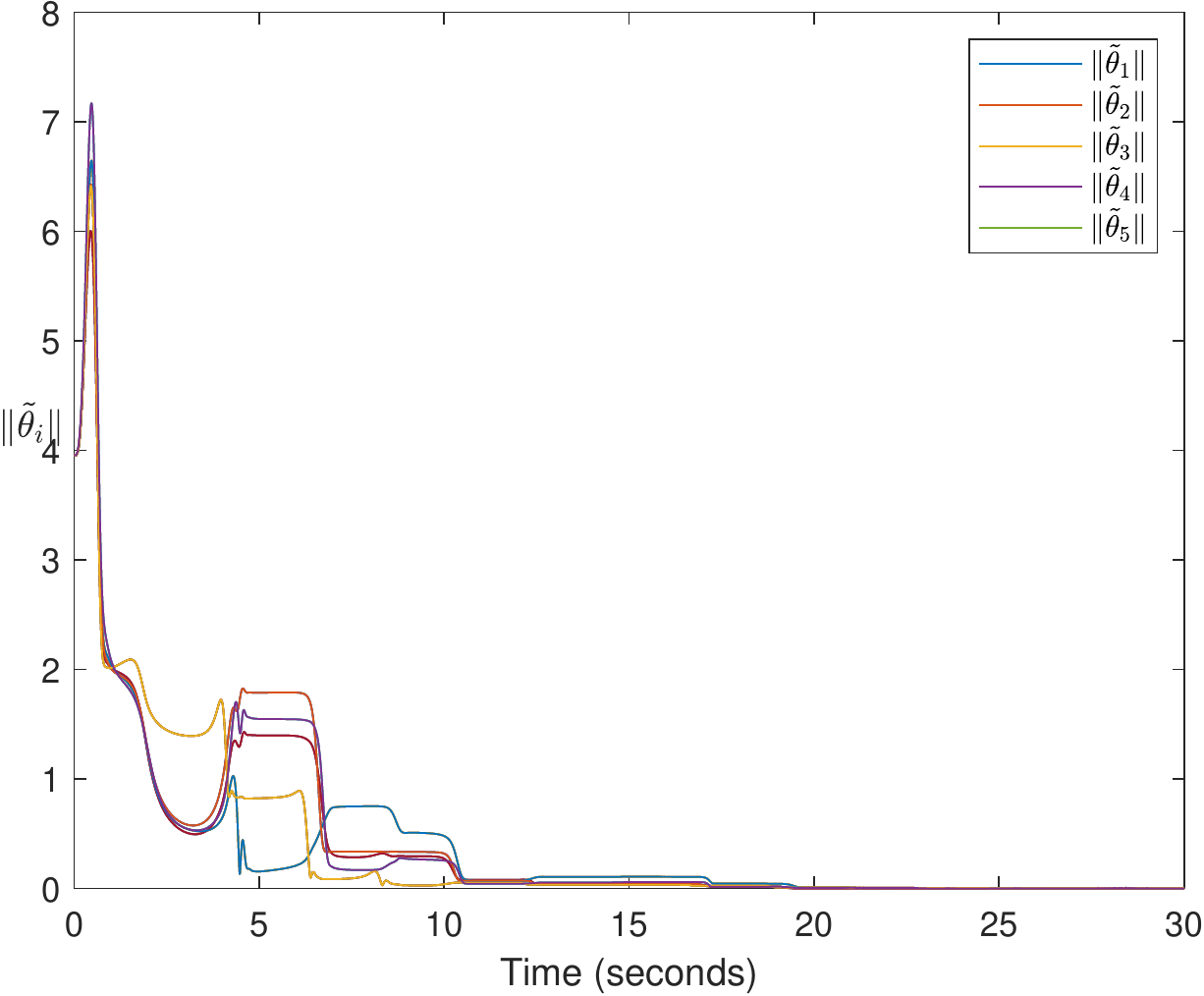,height=2.7in}  
\caption{Trajectories of $\|\hat{\theta}_i(t)-\theta\|$, $i = 1, \ldots, 5.$}
\label{figthe}
\end{figure}
\begin{figure}[ht]
\centering
\epsfig{figure=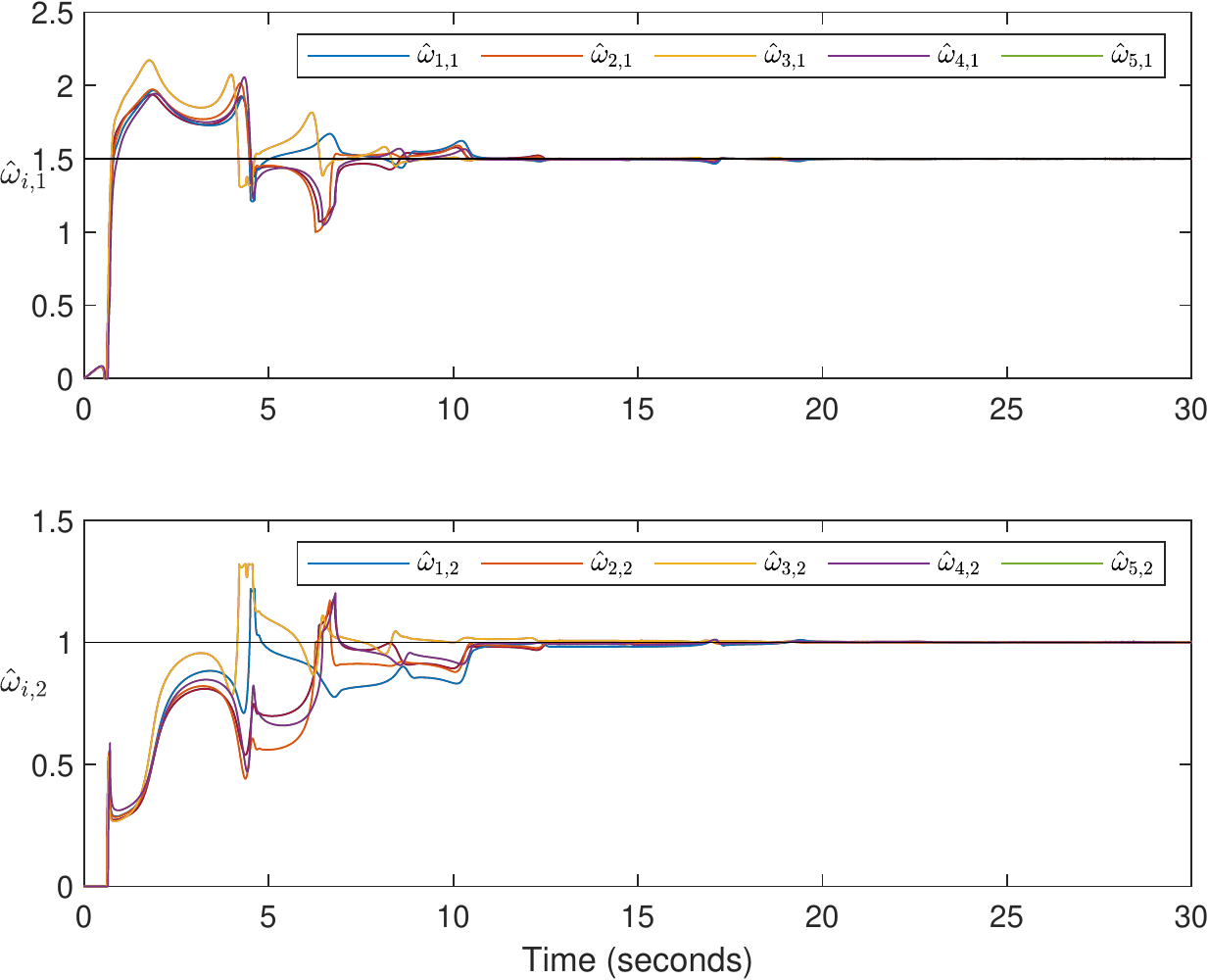,height=2.7in}  
\caption{Trajectories of $\hat{\protect\omega}_i(t)$ with $\omega=\mathrm{col}(1.5,1)$, $i = 1, \ldots, 5.$}
\label{fig5}
\end{figure}
The distributed adaptive observer
directly provides the estimate of $\theta=\mathrm{col}\left(\theta_{1},\theta_{2}\right)$
which are related to the estimates of $\omega_1$ and $\omega_2$ through the following equations
 \begin{equation}\label{oemgasolu}
\hat{\omega}_{1i,2i,}=\sqrt{\frac{\hat{\theta}_{1i}\pm \sqrt{\hat{\theta%
}_{1i}^2-4\hat{\theta}_{2i}}}{2}},~~~~i=1,\ldots,5.
\end{equation}
Fig.~\ref{fig5} shows the trajectories of $\hat{\omega}_i(t)$, $i=1,\dots,5$,
which, as expected, all converge to the actual unknown value of $\omega=\mathrm{col}(1.5,1)$. As noted in Fig.~\ref{fig5}, there are some time intervals where $\hat{\omega}_i(t)=0$, for $i=1,\dots,5$. This is because constraints on $\theta$ through the reparameterization in (\ref{char1}) are not taken into account by the observer design in (\ref{compensator2c}). Hence, the calculation of $\hat{\omega}_i(t)$ using \eqref{oemgasolu} may produce complex numbers. When this happens, $\hat{\omega}_i(t)$ is simply considered as $0$.

\section{Conclusions}
\label{section6}The leader-following consensus problem of a multi-agent system subject to an uncertain leader
system has attracted great interest, and yet approaches for simultaneous parameter estimation and consensus are relatively few. This paper proposed a framework for designing distributed adaptive observers and distributed observer based controllers that ensure asymptotic consensus under unknown parameters of the leader. It was shown that the parameter estimation errors of all agents converge to zero asymptotically when the signal of the leader is sufficiently rich. Furthermore, computations of the bounds involved in the design of the distributed adaptive observer and controller were provided explicitly.


\end{document}